\newtheorem{MyTheo}{Theorem}
\newtheorem{MyCoro}{Corollary}
\begin{document}
\title{Modelling and Performance Analysis of the Over-the-Air Computing in Cellular IoT Networks}
\author{\IEEEauthorblockN{
Ying Dong,
Haonan Hu, \textit{IEEE Member},
Qiaoshou Liu, Tingwei Lv, Qianbin Chen and Jie Zhang, \textit{IEEE Senior Member}}\\

\thanks{This work was supported in part by the National Natural Science Foundation of China (NSFC) under Grant 61831002; in part by the Science and Technology Research Program of Chongqing Municipal Education Commission under Grant KJZD-K202200604. (\textit{Corresponding author: Haonan Hu}.)

Ying Dong, Haonan Hu, Qiaoshou Liu and Qianbin Chen are with the School of Communication and Information Engineering, Chongqing University of Posts and Telecommunications, Chongqing 400 065, China (e-mail:  yingd.cqupt@qq.com; huhn@cqupt.edu.cn; liuqs@cqupt.edu.cn; chenqb@cqupt.edu.cn).

Tingwei Lv is with the International College, Chongqing University of Posts and Telecommunications, Chongqing 400 065, China (e-mail: 2020215128@cqupt.edu.cn). 

Jie Zhang is with the Department of Electronic and Electrical Engineering, University of Sheffield, S1 3JD Sheffield, U.K.(e-mail: jie.zhang@sheffield.ac.uk)}
}

\maketitle

\begin{abstract}
Ultra-fast wireless data aggregation (WDA) of distributed data has emerged as a critical design challenge in the ultra-densely deployed cellular internet of things network (CITN) due to limited spectral resources. Over-the-air computing (AirComp) has been proposed as an effective solution for ultra-fast WDA by exploiting the superposition property of wireless channels. However, the effect of access radius of access point (AP) on the AirComp performance has not been investigated yet. Therefore, in this work, the mean square error (MSE) performance of AirComp in the ultra-densely deployed CITN is analyzed with the AP access radius. By modelling the spatial locations of internet of things devices as a Poisson point process, the expression of MSE is derived in an analytical form, which is validated by Monte Carlo simulations. Based on the analytical MSE, we investigate the effect of AP access radius on the MSE of AirComp numerically. The results show that there exists an optimal AP access radius for AirComp, which can decrease the MSE by up to $12.7\%$. It indicates that the AP access radius should be carefully chosen to improve the AirComp performance in the ultra-densely deployed CITN.
\end{abstract}

\begin{IEEEkeywords}
Over-the-air computation, mean squared error, internet of things.
\end{IEEEkeywords}

\section{INTRODUCTION}
With the rapid growth of internet of things (IoT) applications, i.e., agricultural ecological environment monitoring and smart city, ultra-fast wireless data aggregation (WDA) of distributed data plays a key role in the future cellular IoT network (CITN) for data analysis, edge learning, and action determination \cite{9481289, 10038617}. However, the conventional orthogonal frequency division multiple access (OFDMA) scheme, i.e., all IoT devices transmit data to the access point (AP) via orthogonal wireless channels, may cause excessive latency because of limited spectral resources in the ultra-densely deployed CITN. Recently, over-the-air computing (AirComp) has been considered as a promising technique to achieve ultra-fast WDA in the ultra-densely deployed CITN \cite{9095231}. The AirComp utilizes the signal superposition property of wireless channels, which is able to directly process distributed sensor data by nomographic functions, e.g., arithmetic mean, weighted sum, and polynomial, over the air \cite{6557530,10092857}.

AirComp was first proposed in \cite{4305404}, where a computation code was developed to exploit channel collisions resulting from simultaneous transmission for reliable distributed computation over a multiple-access channel. Based on this work, in \cite{4655452}, the mean square error (MSE) was studied as the key metric for evaluating the AirComp performance. Subsequent research efforts have aimed to optimize the AirComp design to minimize the MSE \cite{8708985,9095231,9097897,9158563,9298474}. In \cite{8708985}, a wireless power-AirComp framework was proposed by jointly optimizing transmit power control and transmit energy to minimize the MSE. In \cite{9095231}, the MSE was minimized by optimizing the transmitting-receiving (Tx-Rx) policy of a single-antenna AirComp system. Based on this work, \cite{9097897} proposed an optimal Tx-Rx parameter design problem to minimize the MSE of AirComp under the constraint of transmit power of the IoT devices. Additionally, in \cite{9158563}, it was shown that the proposed optimal transmit power control policy can reduce the MSE of AirComp. Based on the result, in \cite{9298474}, an algorithm for controlling transmit power was proposed. The simulation results shown that the proposed algorithm can significantly decrease the MSE. However, these works mainly focus on the simulation-based analysis of AirComp performance, while lacking the analytical expression of MSE to evaluate the AirComp performance. Additionally, \cite{9624970} has investigated the AirComp performance in networks with randomly uniform distribution of devices and derived an approximate result of MSE. However, their study assumed that all devices in the random network perform AirComp without the AP access radius. Therefore, the above work cannot analyze the effect of AP access radius on the AirComp performance.

In this work, we analyze the AirComp performance in the CITN, where the locations of IoT devices are modeled following a Poisson point process (PPP). It is assumed that all IoT devices adopt the channel inversion power control mechanism under the constraint of maximum transmit power. Based on this mechanism, the analytical MSE expression of AirComp is derived by Campbell’s theorem and validated by Monte Carlo simulations. Based on the analytical MSE result, the effect of the AP access radius on the MSE are analysed numerically. The results show that an optimal AP access radius exists for AirComp, which can reduce a maximum of $12.7\%$ of the MSE. Consequently, the AP access radius should be carefully chosen in the CITN, which can significantly improve AirComp performance in the ultra-dense deployment of IoT devices.  

\section{SYSTEM MODEL}\label{net_model}
\subsection{Network Model}
\begin{figure}
\centering
\includegraphics[width=6cm]{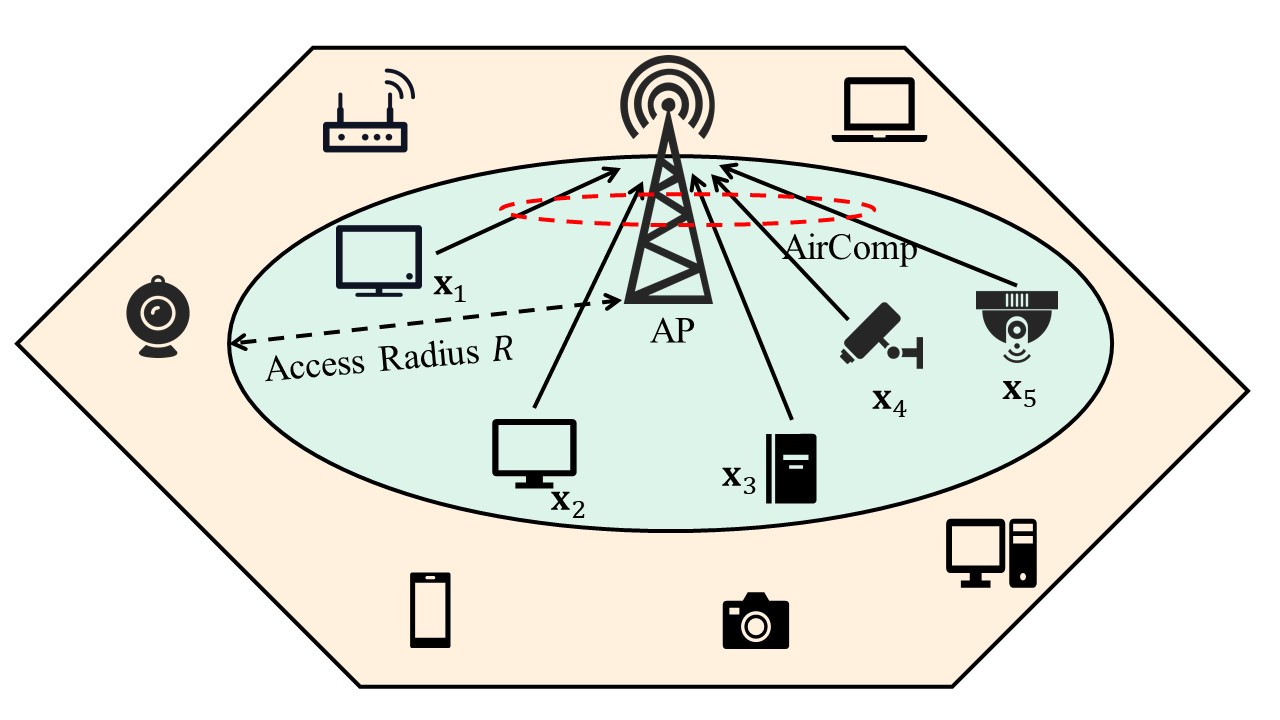}
\caption{\label{fig:fig1}A single-cell IoT network with AirComp.}
\end{figure}

We consider a single-cell CITN consisting of a single-antenna AP and multiple IoT devices\footnote{Since we focus on the AirComp performance in the CITN, for simplicity, it is assumed that the AP is equipped with a single antenna, which can still provide design insights on the CITN with AirComp \cite{10050442,9158563,9624970}.}, as illustrated in Fig. \ref{fig:fig1}. Without loss of generality, we place the AP at the origin, and all IoT devices are located following the PPP with density being $\lambda$ \cite{probability}. Meanwhile, due to limited transmit power of IoT devices, we assume that the IoT devices within an access radius $R$ of the AP are able to perform AirComp. In this case, if there are $K$ IoT devices within the AP access radius, these IoT devices are represented by $\{\boldsymbol{{\rm{x}}}_{k}\}$, $k\in(1,2,\cdots,K)$, where $k$ denotes the $k$-th IoT device within the AP access radius.
 
For the $k$-th IoT device $\boldsymbol{{\rm{x}}}_{k}$, the Euclidean distance between it and the AP can be denoted by $d_{k}=\|\boldsymbol{{\rm{x}}}_{k}\|$. We assume that the path-loss follows a log-distance model, which can be expressed by $l(d_k)=d_{k}^{-\alpha}$, where $\alpha$ denotes the path-loss exponent. 

All IoT devices can transmit their measured data to the AP with an uplink transmit power $P$. The uplink transmit power is assumed to adopt a fractional channel inversion power control mechanism given by $P=pl(u)^{-\epsilon}$ \cite{9158563}, where $p$ denotes the fixed transmit power, $u$ is the distance between the IoT device and the AP, and $\epsilon$ denotes the power control factor. In practice, all IoT devices are constrained by a maximum transmit power $P_{\max}$, i.e., $P\leq P_{\max}$. Consequently, the uplink transmit power $P_k$ of the $k$-th IoT device $\boldsymbol{{\rm{x}}}_{k}$ can be expressed by 
\begin{equation}\label{p_k}
    P_{k}= \min (P_{\max}, pd_k^{\alpha\epsilon}).
\end{equation}

For the small-scale fading, due to the relatively short communication distances in the ultra-densely deployed CITN, Rician fading is assumed. As a result, the power attenuation $h_k$ of small-scale fading of the link from the $k$-th IoT device $\boldsymbol{{\rm{x}}}_{k}$ to the AP can be modeled by \cite{mimo} 
\begin{equation}\label{rician}
    h_{k} = \sqrt{\frac{B}{B+1}}h^{\rm{L}}_{k}+\sqrt{\frac{1}{B+1}}h^{\rm{NL}}_{k},
\end{equation}
where $h^{\rm{L}}_{k}$ represents the line-of-sight (LoS) of small-scale fading, which is modeled as $h^{\rm{L}}_{k}=e^{-j\frac{2\pi d_{k}}{W}}$ \cite{9624970}, where $W$ denotes the wavelength. $h^{\rm{NL}}_{k}$ denotes the non-line-of-sight (NLoS) of small-scale fading, which follows a Rayleigh fading channel. Additionally, $B$ represents the Rician factor, which can be calculated by $\frac{c^2}{2\sigma^2}$, where $c^2$ is the power of the LoS and $\sigma^2$ is the power of the NLoS. Consequently, $|h_{k}|=v$ follows a Rician distribution with a probability density function (PDF) that can be expressed by \cite{mimo}
\begin{equation}\label{rician_pdf}
 f_{h_{k}}(v)=\frac{v}{\sigma^{2}}\exp\left(-\frac{v^{2}+c^2}{2\sigma^{2}}\right)I_{0}\left(\frac{vc}{\sigma^2}\right),  v\geq 0, 
\end{equation}
where $c=\sqrt{\frac{B}{B+1}}$ and $\sigma=\sqrt{\frac{1}{2(B+1)}}$. $I_{0}(\cdot)$ denotes the modified zeroth-order Bessel function of the first kind \cite{bessel}.

\subsection{Over-the-Air Computation}
To achieve ultra-fast WDA of distributed data, AirComp is applied in the ultra-densely deployed CITN. We assume that all IoT devices within the access radius of the AP are perfectly synchronized via exploiting the AirShare technique \cite{7218555}. The function for receiving aggregated data at the AP can be expressed as $\widetilde{f}(k)=\sum_{k=1}^{K}z_k$, where $z_k$ represents the data measured by the $k$-th IoT device $\boldsymbol{{\rm{x}}}_{k}$. To eliminate amplitude differences in the measured data between different IoT devices and facilitate the analysis of the measured data, $z_k$ is normalized as $s_k=g(z_k)$, where $g(\cdot)$ denotes the normalization function to obtain signal $\{s_k\}$ with zero mean and unit variance \cite{9158563}. As \cite{9095231}, we assume the random variables $\{s_k\}$ to be independently and identically distributed (i.i.d) for different IoT devices. Consequently, the function for receiving the aggregated data at the AP can be further expressed as $f(k)=\sum_{k=1}^{K}s_k$. 

The received signal at the AP, denoted by $y$, can be expressed by 
\begin{equation}\label{eq4}
    y=\sum_{k=1}^{K}d_{k}^{-\frac{\alpha}{2}}h_{k}b_{k}s_{k}+n,
\end{equation}
where $n$ is the additive white Gaussian noise (AWGN) with zero mean and variance of $\omega^{2}$, $b_k$ represents the transmit coefficient of the $k$-th IoT device $\boldsymbol{{\rm{x}}}_{k}$, which can be calculated by $\frac{\sqrt{P_k}h_{k}^{\dag}}{|h_k|}$. In this parameter, $h_{k}^{\dag}$ denotes the conjugate function of $h_k$. Consequently, the received signal at the AP can be expressed by 
\begin{equation}
    y=\sum_{k=1}^{K}d_{k}^{-\frac{\alpha}{2}}\sqrt{P_{k}}|h_{k}|s_{k}+n.
\end{equation}
Accordingly, the estimated function recovered by the AP can be expressed by
\begin{equation}
    \widehat{f}(k)=\frac{y}{\sqrt{\eta}}=\frac{\sum_{k=1}^{K}d_{k}^{-\frac{\alpha}{2}}\sqrt{P_{k}}|h_{k}|s_{k}}{\sqrt{\eta}}+\frac{n}{\sqrt{\eta}},
\end{equation}
where $\eta$ denotes the denoising factor at the AP. The AirComp performance can be evaluated by the MSE between the function $f(k)$ and the corresponding estimated function $\widehat{f}(k)$, which can be expressed by \cite{zhou2021machine}
\begin{equation}\label{MSE}
   {\rm{MSE}} = \mathbb{E}\left[\frac{1}{K}\left(\widehat{f}(k)-f(k)\right)^{2} \right],
\end{equation}
where $\mathop{\mathbb{E}}[\cdot]$ represents the expectation with respect to the distribution of the random variables $K$, $\{s_k\}$, $\{d_k\}$, $\{h_k\}$, and $n$.

\section{PERFORMANCE ANALYSIS}
In this section, we first derive the analytical expression of MSE in the CITN. From \eqref{MSE}, the MSE of AirComp at the AP can be calculated by  
\begin{equation}\label{MSE_A}
    \begin{aligned}
  & {\rm{MSE}} =\mathbb{E}\left[\frac{1}{K} \left(\frac{y}{\sqrt{\eta}} - \sum_{k\in \mathcal{K}}s_k\right)^{2} \right]\\
    & =\mathbb{E}\left[\frac{1}{K}\left(\sum_{k=1}^{K}s_{k} \left(\frac{d_{k}^{-\frac{\alpha}{2}} \sqrt{P_{k}} |h_{k}|}{\sqrt{\eta}} -1 \right)+\frac{n}{\sqrt{\eta}} \right)^{2}\right]\\
    & \overset{(a)}{=} \mathbb{E}\left[\frac{1}{K}\left(\sum_{k=1} ^{K}\left(\frac{d_{k}^{-\frac{\alpha}{2}} \sqrt{P_{k}} |h_{k}|}{\sqrt{\eta}} - 1 \right)^{2} + \frac{\omega^{2}}{\eta}\right) \right],
    \end{aligned}
\end{equation}
where step $(a)$ is obtained by taking the expectation of the random variables $\{s_k\}$ and $n$. 

Considering the limited transmit power of IoT devices as described in \eqref{p_k}, the value of the transmit power $P_{k}$ in \eqref{MSE_A} is affected by two factors: the link distance $d_k$ between the $k$-th IoT device $\boldsymbol{{\rm{x}}}_{k}$ and the AP, and the power attenuation $h_k$ of small-scale fading of the link. Specifically, if the magnitude of $|h_k|$ exceeds the threshold $\sqrt{\frac{\eta}{P_{\max}}}d_{k}^{\frac{\alpha\epsilon}{2}}$, the transmit power of the IoT device $\boldsymbol{{\rm{x}}}_{k}$ does not exceed the maximum transmit power $P_{\max}$. In this case, the transmit power of the IoT device $\boldsymbol{{\rm{x}}}_{k}$ can be expressed as $pd_{k}^{\alpha\epsilon}$, where $p=\frac{\eta}{|h_{k}|^{2}}$ \cite{9158563}. Otherwise, the maximum transmit power will be employed, as follows. 
\begin{equation}\label{power}
    \begin{split}
     P_k= \left\{
    \begin{array}{ll}
    P_{\max}, & 0< |h_{k}| \leq \sqrt{\frac{\eta}{P_{\max}}}d_{k}^{\frac{\alpha\epsilon}{2}},\\
    p d_{k}^{\alpha\epsilon}, & |h_{k}| > \sqrt{\frac{\eta}{P_{\max}}}d_{k}^{\frac{\alpha\epsilon}{2}}.
    \end{array}
    \right.    
    \end{split}
\end{equation}

Equipped with \eqref{power}, the analytical result of MSE of AirComp in the CITN is given in Theorem \ref{Theorem} as follows.
\begin{MyTheo}\label{Theorem}
In the CITN with IoT devices deployed following PPP under the limited transmit power of IoT devices, the MSE of AirComp can be given by
\begin{equation}
\begin{aligned}
    &{\rm{MSE}}=\mathcal{K}(R)  \left\{ 2\pi\lambda \left[\int_{1}^{R} \int_{0}^{D(r)} \left(\frac{r^{1-\alpha} P_{\max} v^{2}}{\eta} \right.\right.\right.\\
        &\left. - \frac{2r^{1-\frac{\alpha}{2}}\sqrt{P_{\max}} v}{\sqrt{\eta}}\right)f_{h_{k}}(v){\rm{d}}v {\rm{d}}r + \frac{1}{2}(R^2-1)\\
        &\left.\left. +\int_{1}^{R}\left(r^{\alpha\epsilon-\alpha} - 2r^{\frac{\alpha\epsilon-\alpha}{2}} + 2\right)r Q_{1}\left(\frac{c}{\sigma},\frac{D(r)}{\sigma} \right) {\rm{d}}r \right] + \frac{\omega^{2}}{\eta}\right\},
        \end{aligned}
\end{equation}
where the function $\mathcal{K}(R)$ is defined as
\begin{equation}
    \mathcal{K}(R) \triangleq  \exp\left(-\lambda\pi R^{2}\right)\left[{\rm{ei}}\left(\lambda\pi R^{2}\right) - {\rm{In}}\left(\lambda\pi R^{2}\right)\right].
\end{equation}
In these parameters, ${\rm{ei}}\left(\lambda\pi R^{2}\right)=\int\frac{e^{\lambda\pi R^{2}}}{\lambda\pi R^{2}}{\rm{d}}\left(\lambda\pi R^{2}\right)$. In addition, $D(r)=\sqrt{\frac{\eta}{P_{\max}}}r^{\frac{\alpha\epsilon}{2}}$, the function $f_{h_{k}}(v)$ is given in \eqref{rician_pdf}, and $Q_{1}(a,b)$ denotes Marcum Q-function, $Q_{1}(a,b)=\int_{b}^{\infty}u\exp\left(-\frac{u^2+a^2}{2}\right)I_{0}(au){\rm{d}}u$
\end{MyTheo}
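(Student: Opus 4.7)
The plan is to pick up directly from \eqref{MSE_A}, where the expectation over the data samples $\{s_k\}$ and the AWGN $n$ has already been carried out, and then to evaluate the remaining expectation over $K$, $\{d_k\}$ and $\{h_k\}$ in three successive stages: (i) expand the per-device square; (ii) take the fading expectation under the two-regime power-control rule \eqref{power}; (iii) convert the PPP sum and the $1/K$ factor into closed form. Expanding the square gives
\begin{equation*}
\left(\frac{d_{k}^{-\alpha/2}\sqrt{P_{k}}|h_{k}|}{\sqrt{\eta}}-1\right)^{2}=\frac{d_{k}^{-\alpha}P_{k}|h_{k}|^{2}}{\eta}-\frac{2 d_{k}^{-\alpha/2}\sqrt{P_{k}}|h_{k}|}{\sqrt{\eta}}+1,
\end{equation*}
so each radial location $r=d_{k}$ contributes three terms whose conditional expectations in $|h_{k}|$ must be evaluated against the Rician PDF \eqref{rician_pdf}.

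Next I would invoke \eqref{power} to split the $|h_{k}|$-integral at the threshold $v=D(r)=\sqrt{\eta/P_{\max}}\,r^{\alpha\epsilon/2}$. On $\{0<v\le D(r)\}$ the device saturates at $P_{k}=P_{\max}$, so the first two terms of the expansion keep their explicit $v$-dependence and must be integrated against $f_{h_{k}}$ on $[0,D(r)]$; this yields the first inner integral of the theorem, with the radial Jacobian turning $r^{-\alpha}$ into $r^{1-\alpha}$ and $r^{-\alpha/2}$ into $r^{1-\alpha/2}$. On $\{v>D(r)\}$ the channel-inversion identity $P_{k}=\eta d_{k}^{\alpha\epsilon}/|h_{k}|^{2}$ makes $\sqrt{P_{k}}|h_{k}|/\sqrt{\eta}=d_{k}^{\alpha\epsilon/2}$ independent of $v$, so the integrand in $v$ is a deterministic function of $r$ times the tail probability $P(|h_{k}|>D(r))$, and the latter is precisely the Marcum $Q_{1}(c/\sigma,D(r)/\sigma)$ by the definition following \eqref{rician_pdf}. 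The ``$+1$'' contributions from the two regimes add to unity for every $v$, which after radial integration from $1$ to $R$ (the lower cutoff coming from the singularity of the log-distance path loss at the origin) produces the $\tfrac{1}{2}(R^{2}-1)$ term.

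The spatial averaging is then handled via Campbell's theorem for the homogeneous PPP of intensity $\lambda$: since the point process is restricted to the disk of radius $R$ around the AP, the sum $\mathbb{E}[\sum_{k}g(d_{k},h_{k})]$ rewrites in polar coordinates as $2\pi\lambda\int_{1}^{R}\mathbb{E}_{h}[g(r,h)]\,r\,\mathrm{d}r$, which explains the overall $2\pi\lambda$ in front of the bracketed integrals and leaves the noise contribution $\omega^{2}/\eta$ as an additive constant outside the $k$-sum.

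The main obstacle I expect is reconciling the $1/K$ pre-factor in \eqref{MSE} with the PPP-averaged sum, since $K\sim\mathrm{Poisson}(\lambda\pi R^{2})$ is correlated with the point locations. Using the fact that conditional on $K=n$ the points are i.i.d.\ uniform on the disk, the $1/K$ and the inner sum decouple after writing $\mathbb{E}[\tfrac{1}{K}\mathbf{1}_{K\ge1}]=e^{-\mu}\sum_{n\ge1}\mu^{n}/(n\cdot n!)$ with $\mu=\lambda\pi R^{2}$, and the series identity $\sum_{n\ge1}\mu^{n}/(n\cdot n!)=\mathrm{Ei}(\mu)-\ln(\mu)-\gamma$ produces the prefactor $\mathcal{K}(R)=e^{-\lambda\pi R^{2}}[\mathrm{Ei}(\lambda\pi R^{2})-\ln(\lambda\pi R^{2})]$ stated in the theorem (modulo the Euler constant absorbed in the paper's notation). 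Multiplying this prefactor by the $2\pi\lambda$-integral obtained above, and adding the noise term which is likewise scaled by $\mathcal{K}(R)$, yields the announced closed-form expression.
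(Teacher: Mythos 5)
Your proposal is correct and follows essentially the same route as the paper's Appendix~\ref{proof_mse}: expand the per-device square, split the Rician integral at $D(r)$ so that the power-saturated regime is integrated explicitly while the channel-inversion regime reduces to the tail probability $Q_{1}(c/\sigma,D(r)/\sigma)$, apply Campbell's theorem for the radial average, and evaluate $\mathbb{E}[1/K]$ through the Poisson series to obtain $\mathcal{K}(R)$. The only divergence is that your (correct) bookkeeping of the ``$+1$'' contributions yields $\left(r^{\alpha\epsilon-\alpha}-2r^{\frac{\alpha\epsilon-\alpha}{2}}\right)rQ_{1}$ alongside $\tfrac{1}{2}(R^{2}-1)$, whereas the stated theorem has $+2$ inside that bracket; the penultimate line of the paper's own \eqref{MSE1} agrees with your accounting, so this is an algebra slip in the paper's final expression rather than a gap in your argument.
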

\begin{proof}
    See Appendix \ref{proof_mse}.
\end{proof}

In this work, we aim to minimize the MSE of AirComp in the CITN. It can be observed that the denoising factor $\eta$ may have significant effects on the MSE. As a result, with the analytical expression of MSE given in Theorem \ref{Theorem}, we can obtain the minimal MSE by solving the following problem as follows.
\begin{equation}\label{eq_min_mes}
\begin{split}
&\arg \min_{\{\eta\}} \,\,{\rm{MSE}} \\
& \begin{array}{l@{\quad}l@{}l@{\quad}r}
{\rm{s.t.}} & 0<\eta \leq \widehat\eta(R),\\
\end{array}
\end{split}
\end{equation}
where the feasible range of $\eta$ is given in Corollary \ref{Corollary}. 
\begin{MyCoro}\label{Corollary}
For the MSE of AirComp in the CITN, the maximum denoising factor $\eta$ can be expressed by
\begin{equation}
     \widehat\eta(R)= \max\left(\frac{P_{\max}(3R^2)^{\alpha\epsilon}}{(2R^3-2)^{\alpha\epsilon}},  \frac{\left(2\pi \lambda P_{\max}\frac{R^{2-\alpha}-1}{2-\alpha}+\omega^2\right)^2}{\left(2\pi \lambda \sqrt{P_{\max}} \frac{R^{2-\frac{\alpha}{2}}-1}{2-\frac{\alpha}{2}}\sqrt{\frac{\pi}{2}}\sigma\right)^2}\right).
\end{equation}
\end{MyCoro}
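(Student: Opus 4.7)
The plan is to derive $\widehat\eta(R)$ by combining two independently-motivated upper bounds on $\eta$, each arising from a distinct constraint in the AirComp system: (i) feasibility of the fractional channel-inversion power control in \eqref{p_k}, and (ii) non-negativity of the bracketed MSE quantity in Theorem~\ref{Theorem}. I would then package them into the stated closed form.

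For bound (i), I would begin from $P_k=p\,d_k^{\alpha\epsilon}$ with $p=\eta/|h_k|^2$, so that the physical requirement $P_k\le P_{\max}$ rearranges to $\eta\le P_{\max}|h_k|^2/d_k^{\alpha\epsilon}$. Using $\mathbb{E}[|h_k|^2]=c^{2}+2\sigma^{2}=1$ for the normalised Rician fading in \eqref{rician}, and then averaging the spatial factor against the unrenormalised disk density $2r/R^2$ restricted to the annular access region $[1,R]$, Jensen's inequality moves the exponent outside the expectation to give $\mathbb{E}[d_k]=\tfrac{2(R^3-1)}{3R^2}$. Inverting and raising to the $\alpha\epsilon$ power yields $\eta\le P_{\max}/\mathbb{E}[d_k]^{\alpha\epsilon}=\frac{P_{\max}(3R^2)^{\alpha\epsilon}}{(2R^3-2)^{\alpha\epsilon}}$, which is the first branch.

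For bound (ii), I would work directly with the expression in Theorem~\ref{Theorem}. The only negative contribution is the inner integrand $-2r^{1-\alpha/2}\sqrt{P_{\max}}\,v/\sqrt{\eta}$, which I would upper bound by extending $\int_0^{D(r)} v\,f_{h_k}(v)\,{\rm d}v\le \mathbb{E}[|h_k|]=\sqrt{\pi/2}\,\sigma$ (the Rayleigh-limit value). Similarly the positive $v^2$ contribution is bounded by $\mathbb{E}[|h_k|^2]=1$, absorbed into the coefficient $P_{\max}$. The residual spatial integrals evaluate in closed form as $\int_1^R r^{1-\alpha}\,{\rm d}r=\frac{R^{2-\alpha}-1}{2-\alpha}$ and $\int_1^R r^{1-\alpha/2}\,{\rm d}r=\frac{R^{2-\alpha/2}-1}{2-\alpha/2}$. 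Collecting these together with the AWGN term $\omega^2/\eta$, the non-negativity requirement becomes a quadratic inequality in $1/\sqrt{\eta}$ of the form $(A+\omega^2)/\eta-B/\sqrt{\eta}\ge 0$, which resolves to $\sqrt{\eta}\le(A+\omega^2)/B$; squaring recovers the second branch. The final assertion $\widehat\eta(R)=\max(\cdot,\cdot)$ then packages the two regime-dependent bounds into one expression, with the power-feasibility bound tending to dominate in dense or low-noise settings and the MSE non-negativity bound dominating otherwise.

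The main obstacle I anticipate is pinning down the exact coefficient $\sqrt{\pi/2}\,\sigma$ in bound (ii): the truncated first moment $\int_0^{D(r)} v\,f_{h_k}(v)\,{\rm d}v$ is only the Rayleigh mean in the limit $c\to 0$, so either a Rayleigh-approximation argument consistent with the paper's preceding usage, or a careful Rician bound exploiting monotonicity of the modified Bessel function $I_{0}$ in \eqref{rician_pdf}, is required to land on exactly that coefficient rather than the general Laguerre-type form $\sigma\sqrt{\pi/2}\,L_{1/2}(-c^{2}/(2\sigma^{2}))$. A secondary subtlety lies in Stage~(i): using the unrenormalised density $2r/R^{2}$ (rather than the conditional ring density $2r/(R^{2}-1)$) is essential to produce the algebraic form $3R^{2}/(2R^{3}-2)$ stated in the corollary, and this modelling choice should be made consistent with the way the access disk is treated throughout the paper.
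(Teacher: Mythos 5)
Your first branch is essentially the paper's: the power-feasibility condition $P_k=\eta d_k^{\alpha\epsilon}/|h_k|^2\le P_{\max}$ is averaged using $\mathbb{E}[|h_k|^2]=1$ and $\int_1^R r\cdot\frac{2r}{R^2}\,{\rm d}r=\frac{2(R^3-1)}{3R^2}$, with the exponent pulled outside the expectation exactly as the paper does in Appendix B (your sign of the exponent in fact matches the corollary as stated, whereas the paper's intermediate step \eqref{eta1} writes the reciprocal $P_{\max}[\tfrac{2(R^3-1)}{3R^2}]^{\alpha\epsilon}$). The ``Jensen'' step is only an approximation, not an identity, but the paper commits the same looseness, so I would not count that against you.

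The second branch is where your proposal genuinely diverges and does not go through. The paper does not obtain it from a non-negativity requirement on the MSE: the MSE is an expectation of squares plus a noise term and is automatically non-negative for every $\eta>0$, so that constraint is vacuous, and discarding the constant terms ($\tfrac12(R^2-1)$ and the $+2$ inside the last integral) to make it non-vacuous is ad hoc. The paper's actual argument is that the per-realization objective $\mathcal{G}(\eta)=\frac1K\bigl(\sum_k(a_k/\sqrt\eta-1)^2+\omega^2/\eta\bigr)$ with $a_k=d_k^{-\alpha/2}\sqrt{P_k}|h_k|$ is a quadratic in $1/\sqrt{\eta}$, decreasing up to its stationary point $\eta^*=\bigl((\sum_k a_k^2+\omega^2)/\sum_k a_k\bigr)^2$ and increasing thereafter, so the search interval for the optimal denoising factor never needs to extend beyond $\eta^*$; the second branch is then $\mathbb{E}[\eta^*]$ evaluated as a ratio of expectations via Campbell's theorem, which is exactly where $2\pi\lambda P_{\max}\frac{R^{2-\alpha}-1}{2-\alpha}+\omega^2$ and $2\pi\lambda\sqrt{P_{\max}}\frac{R^{2-\alpha/2}-1}{2-\alpha/2}\sqrt{\pi/2}\,\sigma$ come from. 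Your route also fails numerically: the cross term in Theorem~\ref{Theorem} carries the coefficient $2$, so requiring $\frac{A+\omega^2}{\eta}-\frac{2B}{\sqrt\eta}\ge0$ yields $\sqrt\eta\le\frac{A+\omega^2}{2B}$, i.e.\ one quarter of the claimed bound $\bigl(\frac{A+\omega^2}{B}\bigr)^2$, whereas the stationary point of that same expression sits at $\sqrt\eta=\frac{A+\omega^2}{B}$, which is what the corollary states. Your closing observation that $\sqrt{\pi/2}\,\sigma$ is the Rayleigh rather than the Rician first moment is correct --- the paper itself commits that looseness in step $(c)$ of \eqref{eta2} --- but it does not rescue the factor-of-four discrepancy or the missing optimality argument.
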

\begin{proof}
    See Appendix \ref{eta_proof}.
\end{proof}
As the denoising factor $\eta$ has a limited range, the minimal MSE can be obtained by the bisection search over the range of $\eta$.

\section{SIMULATION RESULTS}
In this section, the analytical result of MSE that is given in Theorem \ref{Theorem} will be firstly validated by Monte Carlo simulations. In each iteration of the Monte Carlo simulation, the locations of IoT devices are deployed following the PPP model described in Section \ref{net_model} in a $100\times100$ m$^2$ square area. Then the MSE of AP located at the origin can be calculated following \eqref{MSE_A} in each iteration. We run $10,000$ iterations and obtain $10,000$ MSEs of the AP. Consequently, the MSE can be obtained by calculating the arithmetic mean of in \eqref{MSE_A}. After the validation, the MSE will be analyzed in terms of the AP access radius numerically.
 
\begin{figure}[!t]
\centering
\includegraphics[width=0.35\textwidth]{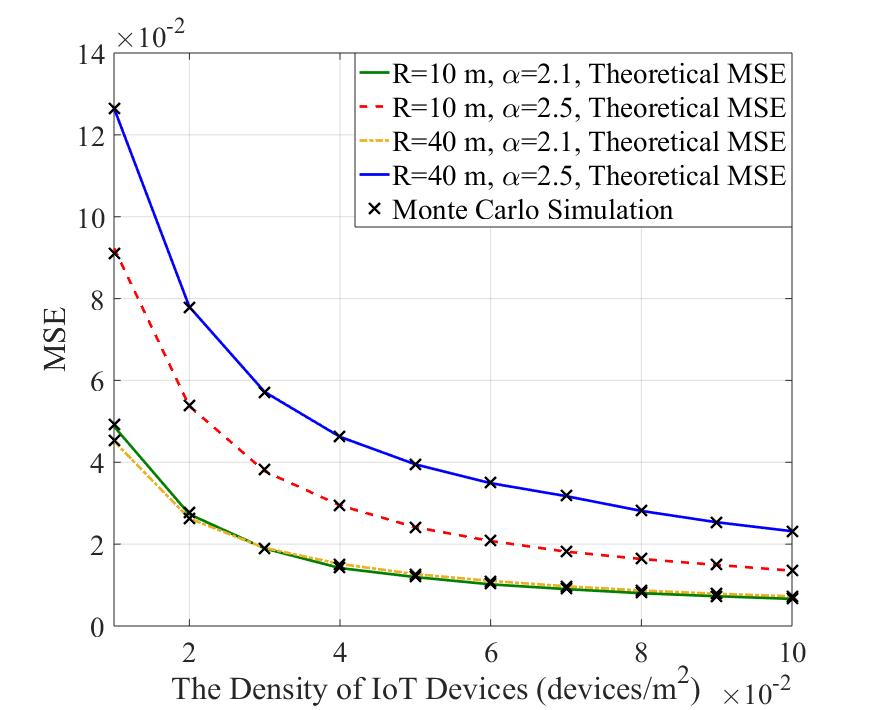}
\caption{\label{fig:fig2} The validation for the MSE versus the density of IoT devices, where $L=0.3$ m, $\omega^2=1$, SNR$=\frac{P_{\max}}{\omega^2}=30$ dB, $B=15$, and $\epsilon=1$.
}
\end{figure}

In Fig. \ref{fig:fig2}, the theoretical and simulation results of MSE in the CITN are plotted versus the density of IoT devices $\lambda$. The theoretical MSEs are matched with the simulation results, verifying the correctness of our derived results. The MSE is inversely proportional to the density of IoT devices $\lambda$. With the increasing of the density of IoT devices, the AP can aggregate more data, which reduces the MSE. Additionally, it is observed that when the path-loss exponent is $2.1$, the MSE with a lower density of IoT devices is smaller at $R=40$ m. Conversely, the MSE with a higher density of IoT devices is smaller at $R=10$ m. It indicates that the AP access radius has a significant impact on the MSE. 
 
\begin{figure}[!t]
\centering
\includegraphics[width=0.35\textwidth]{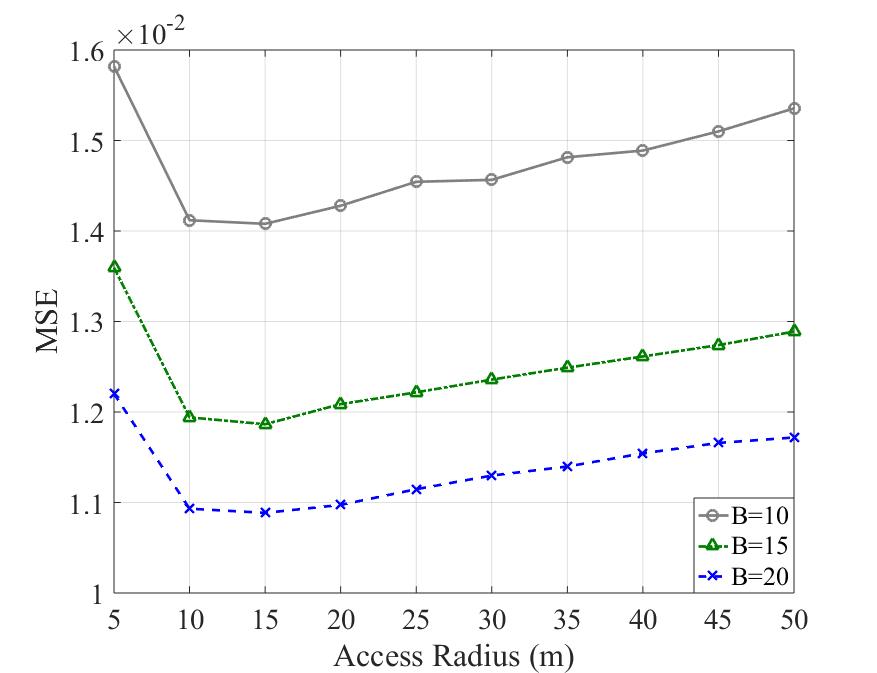}
\caption{\label{fig:fig3} The theoretical MSE versus the AP access radius $R$, where $\lambda=5\times10^{-2}\rm{devices/m^2}$, $\alpha=2.1$ $L=0.3$ m, $\omega^2=1$, SNR$=\frac{P_{\max}}{\omega^2}=30$ dB, and $\epsilon=1$  
}
\end{figure}

In Fig. \ref{fig:fig3}, the theoretical MSEs under $B$ being $10$, $15$, and $20$ are plotted versus the AP access radius $R$. It is observed that an optimal AP access radius exists for minimizing the MSE. Under our simulation environment, with the increasing of AP access radius, a sharp decrease in terms of the MSE occurs when the AP access radius is $15$ m and then the MSE increases, thus the optimal AP access radius occurs near $R=15$ m with the $B$ being $10$, $15$, and $20$. The result indicates that the AP access radius has a significant impact on the MSE performance. The lower AP access radius results in worst-case MSE due to the limitation of the number of IoT devices. Similarly, The higher AP access radius leads to worst-case MSE due to the limitation of transmit power of IoT devices. Additionally, as compared with $R=5$ m, we can observe that the MSE with the optimal AP access radius can be reduced by maximally $12.7\%$ under the Rician factor $B$ being $15$. This result indicates that by choosing an appropriate AP access radius, the AirComp performance can be reduce in the ultra-densely deployed CITN.
 
\section{CONCLUSION}
\label{section_conclusion}
In this work, we have obtained the analytical results for the MSE of AirCommp in the CITN. Based on these numerical results, we have numerically analyzed the effects of the AP access radius and the density of IoT devices on the MSE. Specifically, the minimal MSE can be obtained via the bisection method by given the feasible range of the denoising factor. It has been shown that the MSE decreases with the increasing density of IoT devices. Additionally, we can find an optimal AP access radius that significantly reduces the MSE in the densely deployed CITN. Under our simulation environment, the MSE of AirComp can be reduced by $12.7\%$ at the optimal AP access radius, which is located at $15$ m. Consequently, it is crucial to carefully choose the AP access radius in the CITN to enhance the AirComp performance. For future work, we can investigate large-scale AirComp deployments over multiple APs in the CITN, which need to consider the trade-off between inter-cell interference and the MSE. 

\appendices
\section{Proof of Theorem \ref{Theorem}}\label{proof_mse}
By incorporating \eqref{power} into \eqref{MSE_A}, the MSE of AirComp can be calculated by 
\begin{equation}\label{MSE1}
    \begin{aligned}
        &{\rm{MSE}}=\mathbb{E}\left[\frac{1}{K}\right] \mathop{\mathbb{E}} \limits_{\{d_k\}, \{h_k\}}\left[\sum_{k=1}^{K} \left(\frac{d_{k}^{-\frac{\alpha}{2}} \sqrt{P_{k}} |h_{k}|}{\sqrt{\eta}} - 1 \right)^{2} + \frac{\omega^{2}}{\eta} \right] \\
        & \overset{(b)}{=} \mathbb{E}\left[\frac{1}{K}\right] \mathop{\mathbb{E}} \limits_{\{h_k\}} \left[ 2\pi\lambda \int_{1}^{R} \left(\frac{r^{-\frac{\alpha}{2}}\sqrt{P_{k}} |h_k|}{\sqrt{\eta}} -1\right)^{2}r {\rm{d}}r + \frac{\omega^{2}}{\eta}\right]\\
        & = \mathbb{E}\left[\frac{1}{K}\right]  \left\{ 2\pi\lambda \left[  \int_{1}^{R} \int_{0}^{D(r)} \left(\frac{r^{-\frac{\alpha}{2}}\sqrt{P_{\max}} v}{\sqrt{\eta}} -1\right)^{2}r f_{h_{k}}(v){\rm{d}}v {\rm{d}}r\right.\right.\\
        & \left.\left. +  \int_{1}^{R} \int_{D(r)}^{\infty} \left(r^{\frac{\alpha\epsilon-\alpha}{2}} - 1\right)^{2} r f_{h_{k}}(v){\rm{d}}v {\rm{d}}r \right] + \frac{\omega^{2}}{\eta}\right\}\\
        & = \mathbb{E}\left[\frac{1}{K}\right]  \left\{ 2\pi\lambda \left[\int_{1}^{R} \int_{0}^{D(r)} \left(\frac{r^{1-\alpha} P_{\max} v^{2}}{\eta} \right.\right.\right.\\
        &\left. - \frac{2r^{1-\frac{\alpha}{2}}\sqrt{P_{\max}} v}{\sqrt{\eta}}\right)f_{h_{k}}(v){\rm{d}}v {\rm{d}}r + \frac{1}{2}(R^2-1)\\
        &\left.\left. +\int_{1}^{R}\left(r^{\alpha\epsilon-\alpha} - 2r^{\frac{\alpha\epsilon-\alpha}{2}} + 2\right)r Q_{1}\left(\frac{c}{\sigma},\frac{D(r)}{\sigma} \right) {\rm{d}}r \right] + \frac{\omega^{2}}{\eta}\right\},
    \end{aligned}
\end{equation}
where step $(b)$ is obtained following the Campbell's theorem for sums \cite{probability}, $D(r)=\sqrt{\frac{\eta}{P_{\max}}}r^{\frac{\alpha\epsilon}{2}}$, $f_{h_{k}}(v)$ is the PDF of the Rician distribution as given in \eqref{rician_pdf}, and $Q_{1}(a,b)$ denotes the Marcum Q-function defined as $Q_{1}(a,b)=\int_{b}^{\infty}u\exp\left(-\frac{u^2+a^2}{2}\right)I_{0}(au){\rm{d}}u$. It is worth noting that within the AP access radius of $1$ m, there is no path-loss effect. Therefore, the integration range of $d_{k}$ is from $1$ to $R$.

Additionally, the PDF of the number of IoT devices located within the range of AP access radius can be given by \cite{probability}
\begin{equation}
    \mathbb{P}[K=m]=\frac{\left(\lambda\pi R^{2}\right)^{m}}{m!}\exp\left(-\lambda\pi R^{2}\right).
\end{equation}
As a result, the expectation $\mathbb{E}\left[\frac{1}{K}\right]$ in equation \eqref{MSE1} can be derived by
\begin{equation}\label{K}
    \begin{aligned}
        \mathbb{E}\left[\frac{1}{K}\right]&=\sum_{m=1}^{\infty}\frac{P(K=m)}{m} = \exp\left(-\lambda\pi R^{2}\right)\sum_{m=1}^{\infty}\frac{\left(\lambda\pi R^{2}\right)^{m}}{m m!}\\
        & = \exp\left(-\lambda\pi R^{2}\right) \int \frac{1}{\lambda\pi R^{2}}\sum_{m=1}^{\infty}\frac{\left(\lambda\pi R^{2}\right)^{m}}{m!} {\rm{d}}\left(\lambda\pi R^{2}\right) \\
        & = \exp\left(-\lambda\pi R^{2}\right) \int \frac{1}{\lambda\pi R^{2}} (e^{\lambda\pi R^{2}}-1) {\rm{d}}\left(\lambda\pi R^{2}\right)\\
        & = \exp\left(-\lambda\pi R^{2}\right)\left[{\rm{ei}}\left(\lambda\pi R^{2}\right) - {\rm{In}}\left(\lambda\pi R^{2}\right)\right] \\
        & \triangleq \mathcal{K}(R),
    \end{aligned}
\end{equation}
where ${\rm{ei}}\left(\lambda\pi R^{2}\right)=\int\frac{e^{\lambda\pi R^{2}}}{\lambda\pi R^{2}}{\rm{d}}\left(\lambda\pi R^{2}\right)$. 

By incorporating \eqref{K} into \eqref{MSE1}, the analytical result of MSE in Theorem \ref{Theorem} can be obtained.

\section{Proof of Corollary \ref{Corollary}}\label{eta_proof}
According to \eqref{power}, the feasible range of $\eta$ can be obtained by
\begin{equation}\label{p_eta}
    \begin{split}
     P_k^{\ast}= \left\{
    \begin{array}{ll}
    P_{\max}, & P_{\max}|h_k|^2 d_{k}^{\alpha\epsilon}<\eta\leq \mathcal{H}_{\eta}(d_k, h_k),\\
    \frac{\eta}{|h_{k}|^{2}}d_{k}^{\alpha\epsilon}, & 0<\eta\leq P_{\max}|h_k|^2 d_{k}^{\alpha\epsilon},
    \end{array}
    \right.    
    \end{split}
\end{equation}
where the function $\mathcal{H}_{\eta}(d_k, h_k)=\max \left(P_{\max}|h_k|^2 d_{k}^{\alpha\epsilon} ,\eta^{\ast}(d_k, h_k)\right)$. In these parameters, the function $\eta^{\ast}(d_k, h_k)$ is defined as the extreme point of function $\mathcal{G}(\eta)$, where the function $\mathcal{G}(\eta)$ from \eqref{MSE_A} is defined as
\begin{equation}
    \mathcal{G}(\eta)=\frac{1}{K}\left(\sum_{k=1} ^{K}\left(\frac{d_{k}^{-\frac{\alpha}{2}} \sqrt{P_{k}} |h_{k}|}{\sqrt{\eta}} - 1 \right)^{2} + \frac{\omega^{2}}{\eta}\right).
\end{equation}
By computing the partial differentiation $\frac{ \partial \mathcal{G}(\eta) }{ \partial \eta }=0$, we can obtain the extreme point $\eta^{\ast}(d_k,h_k)$ of $\mathcal{G}(\eta)$ , which can be expressed by
\begin{equation}
    \eta^{\ast}(d_k, h_k) = \left(\frac{\sum_{k=1}^{K}d_k^{-\alpha}P_k|h_k|^2+\sigma_{n}^2}{\sum_{k=1}^{K}d_k^{-\frac{\alpha}{2}}\sqrt{P_k}|h_k|}\right)^2.
\end{equation}
It can be observed that $\frac{ \partial \mathcal{G}(\eta) }{ \partial \eta }<0$ for $\eta<\eta^{\ast}(d_k, h_k)$ and $\frac{ \partial \mathcal{G}(\eta) }{ \partial \eta }>0$ for $\eta>\eta^{\ast}(d_k, h_k)$. Therefore, the function $\mathcal{G}(\eta)$ is monotonically decreasing in $\eta\in (0,\eta^{\ast}(d_k, h_k)]$ and monotonically increasing in $\eta\in [\eta^{\ast}(d_k, h_k), \infty]$. As a result, $\mathcal{G}(\eta^{\ast}(d_k, h_k))$ is the minimum of the function $\mathcal{G}(\eta)$.

Consequently, according to \eqref{p_eta}, the whole feasible range of $\eta$ can be expressed by $\eta\in (0,\mathcal{H}_{\eta}(d_k, h_k)]$. With the independent variables $d_k$ and $h_k$, the expectation $\mathop{\mathbb{E}} \limits_{\{d_k\}, \{h_k\}}[\mathcal{H}_{\eta}(d_k, h_k)]$ need to be derived. First, the expectation of $P_{\max}|h_k|^2 d_{k}^{\alpha\epsilon}$ can be calculated by
\begin{equation}\label{eta1}
    \begin{aligned}     
    &\mathop{\mathbb{E}} \limits_{\{d_k\}, \{h_k\}}\left[P_{\max}|h_k|^2 d_{k}^{\alpha\epsilon}\right]=P_{\max} \left(\int_{1}^{R}r f_{d_k}(r){\rm{d}}r\right)^{\alpha\epsilon}\\
    &= P_{\max}\left[\frac{2(R^3-1)}{3R^2}\right]^{\alpha\epsilon},
    \end{aligned}
\end{equation}
where the PDF of the link distance $d_k$ between the $k$-th IoT device $\boldsymbol{{\rm{x}}}_{k}$ and the AP can be given by $f_{d_k}(r)=\frac{2r}{R^2}$ \cite{9951137}, and $\mathbb{E}[|h_k|^2]=1$. Additionally, when $P_k=P_{\max}$, the expectation of $\eta^{\ast}(d_k, h_k)$ can be calculated by
\begin{equation}\label{eta2}
    \begin{aligned}     
    &\mathop{\mathbb{E}} \limits_{\{d_k\}, \{h_k\}}[\eta^{\ast}(d_k, h_k)]\\
    & =\mathop{\mathbb{E}} \limits_{\{d_k\}, \{h_k\}}\left[\left(\frac{\sum_{k=1}^{K}d_k^{-\alpha}P_{\max}|h_k|^2+\omega^2}{\sum_{k=1}^{K}d_k^{-\frac{\alpha}{2}}\sqrt{P_{\max}}|h_k|}\right)^2\right]\\ 
    & \overset{(c)}{=}  \left(\frac{2\pi\lambda P_{\max}\int_{1}^{R}\int_{0}^{\infty} r^{1-\alpha}v^2 f_{h_k}(v){\rm{d}}v{\rm{d}}r +\omega^2}{2\pi\lambda \sqrt{P_{\max}}\int_{1}^{R}\int_{0}^{\infty} r^{1-\frac{\alpha}{2}}v f_{h_k}(v){\rm{d}}v {\rm{d}}r}\right)^2\\
    & =  \left(\frac{2\pi \lambda P_{\max}\frac{R^{2-\alpha}-1}{2-\alpha}+\omega^2}{2\pi \lambda \sqrt{P_{\max}} \frac{R^{2-\frac{\alpha}{2}}-1}{2-\frac{\alpha}{2}}\sqrt{\frac{\pi}{2}}\sigma}\right)^2,
    \end{aligned}
\end{equation}
where step $(c)$ is obtained following the Campbell's theorem for sums \cite{probability}.

By combining \eqref{eta1} and \eqref{eta2}, the feasible range of $\eta$ can be achieved in Corollary \ref{Corollary}.


\end{document}